\newtheorem{theorem}{Theorem}
\newtheorem{corollary}{Corollary}
\newtheorem{proposition}{Proposition}
\begin{document}

\title{Optimum Linear Codes with Support Constraints over Small Fields}

\author{Hikmet~Yildiz,~\quad~Babak~Hassibi\\
California Institute of Technology\\
Email: \{hyildiz, hassibi\}@caltech.edu}


\maketitle

\begin{abstract}
We consider the problem of designing optimal linear codes (in terms of having the largest minimum distance) subject to a support constraint on the generator matrix.
We show that the largest minimum distance can be achieved by a subcode of a Reed-Solomon code of small field size. As a by-product of this result, we settle the GM-MDS conjecture of Dau \textit{et. al.} in the affirmative.
\end{abstract}


%
%
\section{Introduction}
The problem of designing a linear code with the largest possible minimum distance, subject to support constraints on the generator matrix, has recently found several applications. These include
multiple access networks \cite{halbawi2014distributed,dau2015simple} as well as
weakly secure data exchange
\cite{yan2013algorithms,yan2014weakly}. 
A simple upper bound on the maximum minimum distance can be obtained from a sequence of Singleton bounds (see eq. (\ref{eq_singleton}) below)
and can further be achieved by randomly choosing the nonzero elements of the generator matrix from a field of a large enough size.

A natural question to ask is whether the above maximum minimum distance can be achieved with a field of small size, and in particular with a structured, possible algebraic, construction.
This question is equivalent to a recently proposed conjecture by 
Dau \textit{et al.} \cite{dau2014existence}, which is commonly referred to as the GM-MDS conjecture.

In the past couple of years, progress has been reported on this conjecture.
Heidarzadeh \textit{et al.} \cite{heidarzadeh2017algebraic} have proved it for dimensions $k\leq 5$.
Halbawi \textit{et al.} \cite{halbawi2014distributed} have
proved the statement for $m\leq 3$ if there are $m$ distinct support sets on the rows of the generator matrix.
In the authors' previous work \cite{isit18}, the statement has been proved for ${m\leq 6}$. Halbawi \textit{et al.} \cite{halbawi2016balanced,halbawi2016balanced2} and Song \textit{et al.} \cite{song2018generalized}, have studied the problem when the generator matrix is sparsest and balanced and established the conjecture in this special case. Yan \textit{et al.} \cite{yan2014weakly} give some partial results.

In this paper
we show that the largest minimum distance can be achieved by a subcode of a Reed-Solomon code of small field size, in fact as low as $2n-d$, where $n$ is the code length, and $d$ is the maximum minimum distance dictated by the support constraints. As a by-product of this result, we settle the GM-MDS conjecture in the affirmative.

The remainder of the paper is organized as follows.
In Section II,
we characterize the generator matrices of subcodes of Reed-Solomon codes.
Section III defines the problem (of maximizing $d_{\min}$ subject to support constraints) and shows that it can be reduced to the GM-MDS conjecture. Section IV proposes a more general statement of the problem, that is not directly related to the coding problem, but that more readily lends itself to an induction argument. This is the result we prove which, by fiat, solves the problem of maximizing $d_{\min}$ and the GM-MDS conjecture. The proof is detailed in the Appendix.

\subsection{Notation}
Matrices are shown by bold capital letters and vectors are shown by bold lower case letters.
For $n\geq 0$, we denote by $[n]$ the set $\{1,2,\dots,n\}$ by admitting $[0]=\emptyset$.
For $n\geq 1$, we write $[\theta_i]_{i=1}^m$ to represent the ordered list of objects $\theta_1,\dots,\theta_n$.
For a finite nonempty $S\subset\mathbb Z$, $[\theta_i]_{i\in S}$ is the ordered list of $\theta_i$'s for $i\in S$ in the ascending order of their indices.

$\mathbb{F}[x]$ represents the polynomial ring over the field $\mathbb{F}$, i.e. the set of polynomials with coefficients in $\mathbb{F}$.
$\mathbb{F}(x)$ represents the field of rational functions in $x$ over the field $\mathbb{F}$, i.e. the set of functions that can be written as a ratio of two polynomials in $\mathbb{F}[x]$ such that the denominator is not the zero polynomial.

$[n,k]_q$ and $[n,k,d]_q$ represent a linear code over $\mathbb{F}_q$ with length $n$, dimension $k$, and minimum distance $d$.

%
%
\section{Subcodes of Reed-Solomon Codes}\label{subcodes_rs}
An $[n,\ell,n-\ell+1]_q$ Reed-Solomon code can be generated by a Vandermonde matrix
\begin{equation}
\label{eq_vandermonde}
	\mathbf V=\begin{pmatrix}
		1 & 1 & \cdots & 1\\
		\alpha_1 & \alpha_2 & \cdots & \alpha_n\\
		\vdots & \vdots & & \vdots\\
		\alpha_1^{\ell-1} & \alpha_2^{\ell-1} & \cdots & \alpha_n^{\ell-1}
	\end{pmatrix}\in\mathbb F_q^{\ell\times n}
\end{equation}
for distinct $\alpha_1,\dots,\alpha_n\in\mathbb F_q$.
Reed-Solomon codes have efficient decoders that can correct up to $\lfloor\frac{n-\ell+1}2\rfloor$ errors.

For $n\geq\ell\geq k$, $[n,k]_q$ subcodes of $[n,\ell]_q$ Reed-Solomon codes have generator matrices of the following form:
\begin{equation}
	\mathbf G=\mathbf T\cdot\mathbf V
\end{equation}
where $\mathbf T\in\mathbb F_q^{k\times \ell}$ is full rank and $\mathbf V$ is given in (\ref{eq_vandermonde}).

The minimum distance $d$ is equal to the minimum weight of $\mathbf m \mathbf G$ over all nonzero row vectors $\mathbf m\in\mathbb{F}_q^k$.
Since $\mathbf V$ is a Vandermonde matrix, if we treat the entries of $\mathbf m\mathbf T$ as coefficients of a nonzero polynomial $p\in\mathbb F_q[x]$, then, the entries of $\mathbf m\mathbf G$ will be $p(\alpha_1),\dots,p(\alpha_n)$.
As $\deg p\leq l-1$, the number of nonzero entries in $\mathbf m\mathbf G$ is at least $n-\ell+1$. Therefore, the minimum distance is bounded by $d\geq n-\ell +1$.

We should mention that every $[n,k]_q$ linear code is a subcode of an $[n,n]_q$ Reed-Solomon code. However, we are interested in the subcodes of Reed-Solomon codes with the same minimum distance. In other words, we want to design $\ell$, $\mathbf T$, and $\mathbf V$ such that $d=n-\ell+1$.
Note that in that case, we can use the same decoder of the Reed-Solomon code with the generator matrix $\mathbf V$ to correct up to $\lfloor\frac d2\rfloor$ errors.

%
%
\section{Support Constraint on the Generator Matrix}
Let $S_1,S_2,\dots,S_k\subset[n]$.
For an $[n,k]_q$ linear code, suppose that we have the support constraints ${\forall i\in[k]},{\forall j\in S_i},{\mathbf G_{ij}=0}$ on the generator matrix $\textbf G$.
We are interested in finding a linear code having the largest minimum distance under these constraints.

For any nonempty $\Omega\subset[k]$, the rows of $\mathbf G$ indexed in $\Omega$ have zeros in all their entries indexed in $\bigcap_{i\in\Omega}S_i$.
Consider the submatix of $\mathbf G$ consisting of the rows indexed in $\Omega$ and the columns indexed in $[n]-\bigcap_{i\in\Omega}S_i$.
The minimum distance $d$ of $\mathbf G$ is at most the minimum distance of the code generated by this submatrix, which is at most $n-\left|\bigcap_{i\in\Omega}S_i\right|-|\Omega|+1$ by the Singleton bound. Hence,
\begin{equation}
\label{eq_singleton}
	d\leq n+1-\max_{\emptyset\neq\Omega\subset[k]}\left|\bigcap_{i\in\Omega}S_i\right|+|\Omega|
\end{equation}

\Cref{thm_subcodes} states that we can achieve this bound using the subcodes of Reed-Solomon Codes if $q$ is large enough. In the proof of \Cref{thm_subcodes}, we will show that it is enough to prove a special case, \Cref{thm_gmmds}, where ${\ell=k}$ and the optimum code is an MDS code.

\begin{theorem}
	\label{thm_subcodes}
	Let $S_1,S_2,\dots,S_k\subset [n]$,
	\begin{equation}
		\ell \triangleq \max_{\emptyset\neq\Omega\subset[k]}\left|\bigcap_{i\in\Omega}S_i\right|+|\Omega|
	\end{equation}
	and $q\geq n+\ell-1$ be a field size.
	Then, there exists an $[n,k,d]_q$ subcode of a Reed-Solomon code that achieves ${d= n-\ell+1}$ such that the generator matrix $\mathbf G$ has zeros at $(i,j)$ for ${j\in S_i},{i\in[k]}$.
    \hfill$\diamond$
\end{theorem}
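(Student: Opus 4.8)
The plan is to reduce \Cref{thm_subcodes} to the special case $\ell=k$, in which the optimal code is an $[n,k,n-k+1]_q$ MDS code and the existence claim is exactly the GM--MDS conjecture (stated below as \Cref{thm_gmmds}). First I would reformulate the goal using the description of Reed--Solomon subcodes from Section~\ref{subcodes_rs}: for distinct $\alpha_1,\dots,\alpha_n\in\mathbb F_q$, an $[n,k]_q$ subcode of the $[n,\ell]_q$ Reed--Solomon code with generator matrix $\mathbf V$ as in (\ref{eq_vandermonde}) is generated by $\mathbf G=\mathbf T\mathbf V$ with $\mathbf T\in\mathbb F_q^{k\times\ell}$ of full rank; identifying the $i$-th row of $\mathbf T$ with a polynomial $p_i\in\mathbb F_q[x]$ of degree at most $\ell-1$, we have $\mathbf G_{ij}=p_i(\alpha_j)$, and every nonzero codeword of $\mathbf G$ has weight at least $n-\ell+1$. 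Conversely, any $\mathbf G$ that is full rank and obeys the support constraints has $d\le n-\ell+1$ by (\ref{eq_singleton}) and the definition of $\ell$, so such a $\mathbf G$ automatically achieves $d=n-\ell+1$. Hence it suffices to produce distinct $\alpha_1,\dots,\alpha_n\in\mathbb F_q$ and linearly independent $p_1,\dots,p_k\in\mathbb F_q[x]$ of degree at most $\ell-1$ with $p_i(\alpha_j)=0$ for all $i\in[k]$ and $j\in S_i$. (We may assume $\ell\le n$; otherwise $d=n-\ell+1\le 0$ and the support constraints admit no full-rank generator matrix at all, so the statement is vacuous.)

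Now for the reduction. Taking $\Omega=[k]$ in the definition of $\ell$ gives $\ell\ge k$, so we may adjoin $\ell-k$ empty support sets $S_{k+1}=\dots=S_\ell=\emptyset$. The enlarged family $S_1,\dots,S_\ell$ still satisfies
\[
	\Bigl|\bigcap_{i\in\Omega}S_i\Bigr|+|\Omega|\ \le\ \ell \qquad\text{for every nonempty }\Omega\subseteq[\ell].
\]
Indeed, for $\Omega\subseteq[k]$ this is the definition of $\ell$, and for any $\Omega$ meeting $\{k+1,\dots,\ell\}$ the intersection is empty while $|\Omega|\le\ell$. Thus $S_1,\dots,S_\ell$ meets the hypothesis of the $\ell=k$ case in dimension $\ell$; invoking it (this is \Cref{thm_gmmds}) over a field with $q\ge n+\ell-1$ yields distinct $\alpha_1,\dots,\alpha_n\in\mathbb F_q$ and a basis $p_1,\dots,p_\ell$ of the space of polynomials of degree at most $\ell-1$ with $p_i(\alpha_j)=0$ for $j\in S_i$ --- equivalently, an invertible $\mathbf T'\in\mathbb F_q^{\ell\times\ell}$ with $(\mathbf T'\mathbf V)_{ij}=0$ for all $i\in[\ell]$ and $j\in S_i$. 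Letting $\mathbf T$ be the first $k$ rows of $\mathbf T'$ (full rank) and $\mathbf G=\mathbf T\mathbf V$ completes the construction: by the first paragraph, $\mathbf G$ generates an $[n,k,n-\ell+1]_q$ subcode of the Reed--Solomon code defined by $\mathbf V$, with zeros at every $(i,j)$, $i\in[k]$, $j\in S_i$.

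I expect this reduction to be routine bookkeeping; in particular, adjoining empty sets costs nothing in the field size, so the bound $q\ge n+\ell-1$ is inherited verbatim from the $\ell=k$ case (where the dimension is $\ell$). The entire difficulty is therefore concentrated in that special case --- i.e.\ in \Cref{thm_gmmds}, the GM--MDS conjecture --- which is precisely where an induction becomes necessary, and which the paper handles by embedding it in the more general, induction-friendly statement of Section~IV and carrying out the argument in the Appendix.
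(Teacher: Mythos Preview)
Your reduction is exactly the paper's own proof: observe $\ell\ge k$ by taking $\Omega=[k]$, append empty sets $S_{k+1}=\dots=S_\ell=\emptyset$, invoke \Cref{thm_gmmds} at dimension $\ell$, and keep the first $k$ rows of the resulting generator matrix. You have merely spelled out a few routine verifications (the intersection condition for the enlarged family, why $d=n-\ell+1$ is attained) that the paper leaves implicit.
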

\begin{proof}
	For $\Omega=[k]$, we have $\ell\geq k$.
	Let $S_{k+1},\dots,S_{\ell}=\emptyset$.
	By \Cref{thm_gmmds}, there exists an $[n,\ell]_q$ Reed-Solomon code that has a generator matrix $\mathbf G'$ such that $\mathbf G'_{ij}=0$ for $j\in S_i, i\in[\ell]$.
	The subcode with the generator matrix $\mathbf G$ consisting of the first $k$ rows of $\mathbf G'$ satisfies the desired constraints.
\end{proof}

\subsection{Existence of MDS codes}\label{section_mds}
As a special case, we will describe necessary and sufficient conditions on the support constraints for the existence of a Reed-Solomon code whose generator matrix satisfies these support constraints.
\Cref{thm_gmmds} has been known as the GM-MDS conjecture, which is proposed by Dau \textit{et al.} in \cite{dau2014existence}, where they showed that it is equivalent to \Cref{corollary1} given in \Cref{section_mainthm}.

\begin{theorem}
	\label{thm_gmmds}
	Let $S_1,S_2,\dots,S_k\subset[n]$ and $q\geq n+k-1$ be a field size.
	For any nonempty $\Omega\subset[k]$,
	\begin{equation}
	\label{condcorollary}
	\left|\bigcap_{i\in\Omega}S_i\right|\leq k-|\Omega| 
	\end{equation}
	if and only if there exists an $[n,k]_q$ Reed-Solomon code with a generator matrix $\mathbf G$ such that $\mathbf G_{ij}=0$ for $j\in S_i, {i\in[k]}$.
	\hfill$\diamond$
\end{theorem}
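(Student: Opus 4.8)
We must prove both directions; necessity is immediate, and all the work is in sufficiency. Suppose first that an $[n,k]_q$ Reed--Solomon code has a generator matrix $\mathbf G$ with $\mathbf G_{ij}=0$ for $j\in S_i$, $i\in[k]$. Such a code has $d=n-k+1$, and running the derivation of \eqref{eq_singleton} for a single $\Omega$ rather than maximising over all of them gives $n-k+1=d\le n+1-\big(\big|\bigcap_{i\in\Omega}S_i\big|+|\Omega|\big)$, which rearranges to \eqref{condcorollary}.

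For sufficiency, recall from \Cref{subcodes_rs} that a generator matrix of an $[n,k]_q$ Reed--Solomon code with distinct nodes $\alpha_1,\dots,\alpha_n$ has the form $\mathbf G=\mathbf T\mathbf V$, with $\mathbf V$ the matrix in \eqref{eq_vandermonde} (with $\ell=k$) and $\mathbf T\in\mathbb F_q^{k\times k}$ invertible; row $i$ of $\mathbf G$ is the evaluation vector at $\alpha_1,\dots,\alpha_n$ of a polynomial $p_i$ of degree at most $k-1$ whose coefficients form the $i$-th row of $\mathbf T$, so the support constraint on row $i$ says exactly that $p_i$ vanishes at every $\alpha_j$ with $j\in S_i$. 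A convenient first step is to reduce to the case $|S_i|=k-1$ for all $i$: using \eqref{condcorollary}, one checks that whenever $|S_i|<k-1$ some element of $[n]$ can be adjoined to $S_i$ without violating \eqref{condcorollary} (the forbidden elements are those of $S_i$ together with the sets $\bigcap_{i'\in\Omega\setminus\{i\}}S_{i'}$ ranging over the $\Omega\ni i$ at which \eqref{condcorollary} is tight, and \eqref{condcorollary} itself bounds how many these are), and enlarging the $S_i$ only makes the existence problem harder. After this reduction each $p_i$ is, up to a scalar we normalise to $1$, forced to equal the monic degree-$(k-1)$ polynomial $\prod_{j\in S_i}(x-\alpha_j)$.

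The only freedom left is then the choice of distinct nodes, and $\mathbf G=\mathbf T\mathbf V$ generates the code iff $\det\mathbf T\neq0$. Since $\det\big[\,\mathbf G_{it}\,\big]_{i,t\in[k]}=\det\mathbf T\cdot\prod_{1\le a<b\le k}(\alpha_b-\alpha_a)$, this is equivalent to the non-vanishing of
\[
P(\alpha_1,\dots,\alpha_n)\ =\ \det\!\left[\ \prod_{j\in S_i}(\alpha_t-\alpha_j)\ \right]_{i,\,t\in[k]} .
\]
Because these entries have integer coefficients it is enough to show that $P$ is not the zero element of $\mathbb Z[\alpha_1,\dots,\alpha_n]$, after which distinct nodes with $P\neq0$ exist over any sufficiently large $\mathbb F_q$; a crude Schwartz--Zippel bound on $P\cdot\prod_{a<b}(\alpha_b-\alpha_a)$ is too lossy to yield the stated $q\ge n+k-1$, so the sharp bound must be extracted from the structure of the argument that follows.

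That argument --- the purely combinatorial claim that \eqref{condcorollary} forces $P\not\equiv0$ --- is the heart of the matter, and is where I expect essentially all the difficulty to lie. I would prove it by induction, carried out for the slightly more general assertion that allows the $S_i$ arbitrary sizes (so the induction never re-runs the completion step) and asserts linear independence of the $p_i$; the inductive step picks an element $j_0\in[n]$, splits into cases according to $\big|\{\,i:j_0\in S_i\,\}\big|$, and in each case passes to a smaller instance by deleting $j_0$ and possibly a row. The delicate point, and the main obstacle, is verifying that \eqref{condcorollary} survives for the reduced sets --- i.e. that the condition is sharp enough to be self-propagating --- after which one reassembles from the inductive data a monomial of $P$ with coefficient $\pm1$, each reduction spending at most one fresh node and thereby accounting cumulatively for the $n+k-1$ field elements. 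A variant avoiding the case split is to produce such a monomial directly, by expanding the determinant over bijections $\sigma\colon[k]\to[k]$ and then over the roots contributed by each row and seeking a term in which no cancellation occurs; this is a Hall-type system-of-distinct-representatives condition governed by \eqref{condcorollary}, and the bookkeeping looks comparably intricate.
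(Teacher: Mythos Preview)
Your overall architecture is right --- reduce to $|S_i|=k-1$, write $\det\mathbf T$ as a polynomial in the $\alpha_j$'s, and show it is not identically zero --- and you correctly locate the difficulty in making an induction self-propagate. But the specific ``slightly more general assertion'' you propose is false, and this is exactly the obstruction that defeated earlier attempts. If you allow arbitrary $|S_i|$ and assert that the monic polynomials $p_i=\prod_{j\in S_i}(x-\alpha_j)$ are linearly independent whenever \eqref{condcorollary} holds, take $k=3$, $S_1=S_2=\{1\}$, $S_3=\{2\}$: all instances of \eqref{condcorollary} are satisfied, yet $p_1=p_2$. More generally, once you delete a node $j_0$ from the sets containing it, two previously distinct $S_i$'s can coincide while \eqref{condcorollary} survives (with equality), and your induction hypothesis collapses. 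Deleting ``possibly a row'' does not cure this: several sets may collide simultaneously, and removing enough rows to separate them destroys \eqref{condcorollary} for the remaining ones.

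The paper's proof supplies precisely the missing idea: instead of allowing only arbitrary sizes, it allows \emph{multiplicities} $r_i$ and replaces each repeated row by a Toeplitz block (equivalently, the family $p_i,xp_i,\dots,x^{r_i-1}p_i$), forming a generalized Sylvester matrix $\mathbf M[(S_i,r_i)]_{i=1}^m$. The hypothesis \eqref{condcorollary} becomes $\bigl|\bigcap_{i\in\Omega}S_i\bigr|+\sum_{i\in\Omega}r_i\le\max_{i\in\Omega}(|S_i|+r_i)$, and the induction runs on $(k,m,n)$ in lexicographic order with three cases: a tight $\Omega$ of intermediate size lets you split into two smaller instances; a unique maximiser of $|S_i|+r_i$ lets you peel off the last column and drop $k$; otherwise you can specialise $\alpha_n=0$ and drop $n$. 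The point is that when sets collide you \emph{merge} their blocks (increasing some $r_i$) rather than delete a row, and the block structure keeps the determinant honest. Your proposal sketches the skeleton but lacks this merging mechanism, without which the induction cannot close.
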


%
%
\section{Main Theorem}\label{section_mainthm}
For $n\geq 0$, let $\mathbb{K}_n=\mathbb{Q}(\alpha_1,\dots,\alpha_n)$ be the field of rational functions in $\alpha_1,\dots,\alpha_n$ over the set of rational numbers. We will admit that $\mathbb{K}_0=\mathbb{Q}$.
For $k\geq m\geq 1$ and $n\geq 0$, define
\begin{IEEEeqnarray}{rCl}
	\mathcal{S}_{k,m,n}
	& = & \biggl\{
	[(S_i,r_i)]_{i=1}^m 
	\biggm|
	\forall i\in[m]\: S_i\subset[n], r_i\in\mathbb{Z}^+,
	\nonumber\\
	&& \qquad\qquad
	|S_i|+r_i\leq k,\quad \sum_{i=1}^mr_i=k
	\biggr\}
\end{IEEEeqnarray}

Define the matrix $\mathbf M\in\mathbb{K}_n^{k\times k}$ in terms of the parameters $[(S_i,r_i)]_{i=1}^m\in\mathcal{S}_{k,m,n}$ as follows
(We will often write ${\mathbf M[(S_i,r_i)]_{i=1}^m}$ to indicate its parameters):
\begin{IEEEeqnarray}{rCl}\setlength\arraycolsep{3pt}
	\mathbf M=
	\begin{pmatrix}
		1  & {\displaystyle\sum_{j\in S_1}\!\alpha_j}& \dots & {\displaystyle\prod_{j\in S_1}\!\alpha_j} &0 &\cdots &&\\
		& \ddots& & & \ddots\\
		0 & & 1 & {\displaystyle\sum_{j\in S_1}\!\alpha_j} & \dots &  {\displaystyle\prod_{j\in S_1}\!\alpha_j}  &0&\cdots\\
		\hline
		&&\vdots\\
		\hline
		1  & {\displaystyle\sum_{j\in S_m}\!\alpha_j}& \dots & {\displaystyle\prod_{j\in S_m}\!\alpha_j} &0 &\cdots &&\\
		& \ddots& & & \ddots\\
		0 & & 1 & {\displaystyle\sum_{j\in S_m}\!\alpha_j} & \dots &  {\displaystyle\prod_{j\in S_m}\!\alpha_j}  &0&\cdots
	\end{pmatrix}
	\begin{matrix}
		\left.\vphantom{\begin{pmatrix}{\displaystyle\sum_{S_1}}\\\ddots\\{\displaystyle\sum_{S_1}}\end{pmatrix}}
		\right\} r_1\\
		\vdots\quad \\
		\left.\vphantom{\begin{pmatrix}{\displaystyle\sum_{S_1}}\\\ddots\\{\displaystyle\sum_{S_1}}\end{pmatrix}}
		\right\} r_m
	\end{matrix}
\end{IEEEeqnarray}
The rows are partitioned into $m$ blocks and for $i\in[m]$, the $i$th block is an $r_i\times k$ upper triangular Toeplitz matrix,
whose first row consists of the coefficients of the polynomial $x^{k-|S_i|-1}\prod_{j\in S_i}(x+\alpha_j)$ in descending order with respect to the degree. This matrix can be also thought of as a generalized Sylvester matrix that is constructed by $m$ polynomials.
The condition $|S_i|+r_i\leq k$ ensures that the rows are not shifted too much that we lose a nonzero entry in the last row of the $i$th block. Also, notice that the bottom-right entry of the $i$th block is nonzero if we have the equality $|S_i|+r_i=k$, and zero otherwise. So, we need at least one equality if we want $\mathbf M$ to be nonsingular.

In \Cref{proposition}, we give an equivalent way of writing $\det \mathbf M = 0$ in terms of the polynomials that we use  when constructing $\mathbf M$.
\begin{proposition}
	\label{proposition}
	Let $[(S_i,r_i)]_{i=1}^m\in\mathcal{S}_{k,m,n}$. For $i\in[m]$, define
	\begin{equation}
	p_i=x^{k-|S_i|-r_i}\prod_{j\in S_i}(x+\alpha_j)
	\end{equation}
	Then, $\det\mathbf M[(S_i,r_i)]_{i=1}^m=0$ if and only if there exist $q_1,\dots, q_m\in\mathbb{K}_n[x]$, not all zero, such that $\deg q_i\leq r_i-1$ for $i\in[m]$ and $\sum_{i=1}^mp_iq_i=0$.
	\hfill$\diamond$
\end{proposition}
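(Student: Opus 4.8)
The plan is to realize the row space of $\mathbf M$ as a space of polynomials, under which the block structure becomes multiplication by the $p_i$. Fix the dictionary that identifies a row vector $(v_1,\dots,v_k)\in\mathbb K_n^k$ with the polynomial $\sum_{j=1}^k v_jx^{k-j}\in\mathbb K_n[x]$ of degree at most $k-1$; this map is $\mathbb K_n$-linear and injective. Under it, the first row of the $i$th block is, by construction, the polynomial $x^{k-|S_i|-1}\prod_{j\in S_i}(x+\alpha_j)$, and shifting a row one step to the right lowers every exponent by one, i.e.\ multiplies its polynomial by $x^{-1}$. Hence the $t$th row of the $i$th block (for $1\le t\le r_i$) is the polynomial $x^{k-|S_i|-t}\prod_{j\in S_i}(x+\alpha_j)=x^{r_i-t}p_i$. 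The hypothesis $|S_i|+r_i\le k$ is exactly what guarantees $k-|S_i|-t\ge 0$ for all $t\le r_i$, so these are honest polynomials and no nonzero entry is pushed past column $k$.

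Next I would parametrize a generic left vector. Write $\mathbf m=[\mathbf m_i]_{i=1}^m$ with $\mathbf m_i=(a_{i,1},\dots,a_{i,r_i})\in\mathbb K_n^{r_i}$, and set $q_i=\sum_{t=1}^{r_i}a_{i,t}x^{r_i-t}$. As $(a_{i,1},\dots,a_{i,r_i})$ ranges over $\mathbb K_n^{r_i}$, the polynomial $q_i$ ranges bijectively over all of $\{q\in\mathbb K_n[x]:\deg q\le r_i-1\}$ (the zero polynomial included), and $\mathbf m\neq\mathbf 0$ if and only if the $q_i$ are not all zero. By the previous paragraph and linearity, the polynomial attached to $\mathbf m\mathbf M$ is $\sum_{i=1}^m q_ip_i$, whose degree is at most $(r_i-1)+\deg p_i=(r_i-1)+(k-r_i)=k-1$; hence it is faithfully represented by its $k$-tuple of coefficients. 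Therefore $\mathbf m\mathbf M=\mathbf 0$ if and only if $\sum_{i=1}^m q_ip_i=0$ in $\mathbb K_n[x]$.

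Finally, since $\mathbf M$ is square over the field $\mathbb K_n$, we have $\det\mathbf M=0$ if and only if there exists a nonzero $\mathbf m$ with $\mathbf m\mathbf M=\mathbf 0$. Combining this with the equivalence just established gives precisely the claim: $\det\mathbf M[(S_i,r_i)]_{i=1}^m=0$ if and only if there exist $q_1,\dots,q_m\in\mathbb K_n[x]$, not all zero, with $\deg q_i\le r_i-1$ for each $i$ and $\sum_{i=1}^m p_iq_i=0$.

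I do not expect a genuine obstacle here; the statement is essentially the classical fact that a generalized Sylvester matrix is singular exactly when the generating polynomials admit a nontrivial combination with the prescribed degree bounds. The only point that needs care is the exponent bookkeeping: one must check that the column-to-monomial dictionary $j\leftrightarrow x^{k-j}$, the within-block right shift (multiplication by $x^{-1}$), and the degree bound $\deg(q_ip_i)\le k-1$ are all mutually consistent, so that the passage between vectors in $\mathbb K_n^k$ and polynomials of degree $<k$ is a genuine $\mathbb K_n$-linear isomorphism. Once that dictionary is pinned down, every other step is forced.
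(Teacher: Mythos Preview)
Your proposal is correct and follows essentially the same approach as the paper: identify a left vector $\mathbf y$ with a tuple $(q_1,\dots,q_m)$ of polynomials with $\deg q_i\le r_i-1$, observe that $\mathbf y\mathbf M$ encodes the coefficient vector of $\sum_i p_iq_i$, and conclude via the standard equivalence between singularity and the existence of a nonzero left kernel vector. The paper's proof is a two-sentence version of exactly this argument; you have simply made the vector--polynomial dictionary and the exponent bookkeeping explicit.
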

\begin{proof}
	For each $q_i$, construct a row vector of size $r_i$ consisting of the coefficients of $x^{r_i-1},\dots,x,1$ in $q_i$
	and merge them into one row vector $\mathbf y\in\mathbb{K}_n^{1\times k}$.
	Then, ${\sum_{i=1}^mp_iq_i=0}$ iff  ${\mathbf y\cdot\mathbf M=0}$.
	The statement follows from $\det\mathbf M=0$ iff there exists nonzero ${\mathbf y\in\mathbb{K}_n^{1\times k}}$ such that ${ \mathbf y\cdot \mathbf M=0}$.
\end{proof}

\Cref{maintheorem} gives necessary and sufficient conditions on the parameters $[(S_i,r_i)]_{i=1}^m$ for $\det\mathbf M$ to be nonzero.
\begin{theorem}
	\label{maintheorem}
	Let $k\geq m\geq 1$, $n\geq 0$,
	${[(S_i,r_i)]_{i=1}^m\in\mathcal{S}_{k,m,n}}$.
	Then, $\det\mathbf M[(S_i,r_i)]_{i=1}^m \neq 0$
	if and only if
	for any nonempty $\Omega\subset[m]$,
	\begin{equation}
	\label{condition}
	\left|\bigcap_{i\in\Omega}S_i\right| + \sum_{i\in\Omega}r_i \leq \max_{i\in\Omega} |S_i|+r_i
	\end{equation}
	\hfill$\diamond$
\end{theorem}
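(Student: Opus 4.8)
The plan is to route everything through \Cref{proposition}, so that both directions of \Cref{maintheorem} become statements about the (non)existence of a syzygy $\sum_{i=1}^m p_iq_i=0$ with $\deg q_i\le r_i-1$. \emph{Necessity} is the easy direction. Suppose \eqref{condition} fails for some nonempty $\Omega\subset[m]$, and put $T=\bigcap_{i\in\Omega}S_i$ and $M=\max_{i\in\Omega}(|S_i|+r_i)$. Every $p_i$ with $i\in\Omega$ is divisible by $g=x^{k-M}\prod_{j\in T}(x+\alpha_j)$, and the quotient $\hat p_i=p_i/g$ has degree $M-|T|-r_i$. Hence, for arbitrary $q_i$ with $\deg q_i\le r_i-1$, the polynomial $\sum_{i\in\Omega}\hat p_iq_i$ lies in the $(M-|T|)$-dimensional space of polynomials of degree below $M-|T|$, while the tuples $(q_i)_{i\in\Omega}$ range over a space of dimension $\sum_{i\in\Omega}r_i$. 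Failure of \eqref{condition} means exactly $\sum_{i\in\Omega}r_i>M-|T|$, so this linear map has a nontrivial kernel; multiplying a kernel element by $g$ and setting $q_i=0$ for $i\notin\Omega$ produces a nontrivial syzygy, so $\det\mathbf M=0$.

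\emph{Sufficiency} is where the work lies, and I would prove it by induction, say on $\sum_{i=1}^m|S_i|$ with ties broken by $k$. In the base case $\sum_i|S_i|=0$ every $p_i$ is a power of $x$ and \eqref{condition} forces $m=1$, for which $\mathbf M$ is (block) upper triangular with unit diagonal. For the inductive step I would first try to \emph{delete an element}: if some $j\in\bigcup_iS_i$ has the property that $[(S_i\setminus\{j\},r_i)]_{i=1}^m$ still satisfies \eqref{condition} for every $\Omega$, then specializing $\alpha_j\to0$ turns $\mathbf M$ into $\mathbf M[(S_i\setminus\{j\},r_i)]_{i=1}^m$, which is nonsingular by the inductive hypothesis, whence $\det\mathbf M\neq0$. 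A direct check shows that deleting $j$ can break \eqref{condition} only for an $\Omega$ that is \emph{tight} (equality in \eqref{condition}), such that every index attaining $\max_{i\in\Omega}(|S_i|+r_i)$ contains $j$, while at least one index of $\Omega$ does not; if no such $\Omega$ exists for some $j$, we are done.

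If every element of $\bigcup_iS_i$ is ``blocked'' in this sense, I would locate a proper tight subset $\Omega\subsetneq[m]$ and split the determinant along it. With $g=\gcd\{p_i:i\in\Omega\}$, tightness forces $\deg g=k-\sum_{i\in\Omega}r_i$, so the rows of $\mathbf M$ indexed by $\Omega$ span exactly the space of degree-$<k$ multiples of $g$; that these rows are moreover \emph{independent} is precisely the nonsingularity of the smaller configuration $[(S_i\setminus\bigcap_{i'\in\Omega}S_{i'},\,r_i)]_{i\in\Omega}$, which one checks still satisfies \eqref{condition} and is therefore covered by the inductive hypothesis. It then remains to show that the rows indexed by $[m]\setminus\Omega$ are independent modulo that span, i.e.\ independent in $\mathbb{K}_n[x]/(g)$ after reducing the corresponding $x^sp_i$ mod $g$; this is the nonsingularity of a second, smaller instance.

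The main obstacle is that this second ``complementary'' piece does not automatically lie in the class of matrices $\mathbf M$: reducing the $p_i$ modulo $g$ destroys the factored form $x^a\prod_j(x+\alpha_j)$, so it is not literally a generalized Sylvester matrix of the required type. The real content of the argument is to choose $\Omega$, and to phrase the induction, so that the complementary piece does translate into a genuine smaller instance of \Cref{maintheorem} (possibly a mild generalization) still obeying \eqref{condition}; in addition one needs the combinatorial fact that a proper tight subset always exists once no element can be deleted, which I expect to follow from a submodularity/lattice property of the deficiency $\Omega\mapsto\max_{i\in\Omega}(|S_i|+r_i)-|\bigcap_{i\in\Omega}S_i|-\sum_{i\in\Omega}r_i$. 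Keeping the reduced instances inside the class while a well-chosen complexity measure strictly decreases is the crux, and I would expect the bulk of the appendix to be exactly this bookkeeping.
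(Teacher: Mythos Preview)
Your necessity argument is correct and is essentially the paper's (a dimension count in place of the Vandermonde calculation).

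For sufficiency, the overall plan (specialize an $\alpha_j$ to $0$, or split along a tight subset) matches the paper, but two points need correction.

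The obstacle you flag about the ``complementary piece'' is not handled by bookkeeping in the paper; it is dissolved by a clean device you are missing. One does \emph{not} reduce the $p_i$ with $i\notin\Omega$ modulo $g$. Instead, one \emph{replaces} all indices in the tight set $\Omega$ by a single new index $0$ carrying $S_0=\bigcap_{i\in\Omega}S_i$ and $r_0=\sum_{i\in\Omega}r_i$. Since tightness gives $|S_0|+r_0=\max_{i\in\Omega}(|S_i|+r_i)\le k$, the list $[(S_i,r_i)]_{i\in\{0\}\cup([m]\setminus\Omega)}$ lies in $\mathcal S_{k,\,m-|\Omega|+1,\,n}$ and $p_0$ is literally $g$, already in factored form. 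The syzygy transfer is $q_0=\sum_{i\in\Omega}(p_i/p_0)q_i$, and both reduced instances are genuine instances of \Cref{maintheorem} with strictly smaller $m$ (this is why the paper inducts on $(k,m,n)$ in lexicographic order rather than on $\sum_i|S_i|$).

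More seriously, your dichotomy ``either some element can be deleted, or a proper tight subset of size $\ge 2$ exists'' is false, so the ``combinatorial fact'' you hope for does not hold. Take $m=2$, $k=2$, $r_1=r_2=1$, $S_1=\emptyset$, $S_2=\{1\}$: condition \eqref{condition} holds, deleting the unique element $1$ breaks \eqref{condition} for $\Omega=\{1,2\}$, and the only tight set of size $\ge 2$ is $[m]$ itself. The paper closes exactly this gap with a third move you omit: when $|S_i|+r_i=k$ for a \emph{unique} $i$, the last column of $\mathbf M$ has a single nonzero entry $\prod_{j\in S_i}\alpha_j$, and cofactor expansion reduces to the instance with $r_i\leftarrow r_i-1$, $k\leftarrow k-1$ (or drops the block if $r_i=1$), which one checks still satisfies \eqref{condition}. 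The paper's three cases are thus: (1) a tight $\Omega$ with $2\le|\Omega|\le m-1$ exists, split as above; (2) a unique maximizer at level $k$, decrement $r_i$; (3) otherwise there are at least two maximizers at level $k$ and every $\Omega$ with $2\le|\Omega|\le m-1$ is strictly slack, from which one exhibits an element whose deletion preserves \eqref{condition}. No submodularity argument is needed once Case~(2) is in place.
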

\begin{proof}
	See Appendix \ref{proofmainthm}.
\end{proof}

In \Cref{maintheorem}, if we let $k=m$, $r_i=1$, and $|S_i|=k-1$, we will get \Cref{corollary1}.

\begin{corollary}
	\label{corollary1}
	Let $S_1,S_2,\dots,S_k\subset[n]$ such that $|S_i|=k-1$.
	Then, the determinant of
    \begin{IEEEeqnarray}{rCl}\setlength\arraycolsep{3pt}
    	\mathbf M[(S_i,1)]_{i=1}^k=
    	\begin{pmatrix}
    		1 & \sum_{j\in S_1}\alpha_j & \cdots & \prod{j\in S_1}\alpha_j\\
            1 & \sum_{j\in S_2}\alpha_j & \cdots & \prod{j\in S_2}\alpha_j\\
            \vdots & \vdots & & \vdots\\
            1 & \sum_{j\in S_m}\alpha_j & \cdots & \prod{j\in S_m}\alpha_j
    	\end{pmatrix}\quad
    \end{IEEEeqnarray}
    is nonzero if and only if
    for any nonempty $\Omega\subset[k]$,
	\begin{equation}
	\label{condcorollary}
	\left|\bigcap_{i\in\Omega}S_i\right|\leq k-|\Omega| 
	\end{equation}
	\hfill$\diamond$
\end{corollary}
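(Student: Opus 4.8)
The plan is to read off Corollary \ref{corollary1} as the specialization of Theorem \ref{maintheorem} to the parameters $m = k$, $r_i = 1$ for all $i$, and $|S_i| = k-1$ for all $i$. First I would check that these parameters are admissible, i.e. that $[(S_i,1)]_{i=1}^k \in \mathcal{S}_{k,k,n}$: indeed each $r_i = 1 \in \mathbb{Z}^+$, each $|S_i| + r_i = (k-1)+1 = k \leq k$, and $\sum_{i=1}^k r_i = k$, so every defining constraint of $\mathcal{S}_{k,k,n}$ holds. I would also confirm that the matrix displayed in the corollary is genuinely $\mathbf{M}[(S_i,1)]_{i=1}^k$ in the sense of Section \ref{section_mainthm}: with $r_i = 1$ each block collapses to a single row, and since $k - |S_i| - r_i = 0$ that row consists of the coefficients of $\prod_{j\in S_i}(x+\alpha_j)$ in descending degree order, namely $1,\ \sum_{j\in S_i}\alpha_j,\ \dots,\ \prod_{j\in S_i}\alpha_j$, exactly the row shown.

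Then Theorem \ref{maintheorem} gives that $\det\mathbf{M}[(S_i,1)]_{i=1}^k \neq 0$ if and only if for every nonempty $\Omega \subset [k]$,
\[
	\left|\bigcap_{i\in\Omega} S_i\right| + \sum_{i\in\Omega} r_i \ \leq\ \max_{i\in\Omega}\bigl(|S_i| + r_i\bigr).
\]
Substituting $r_i = 1$ makes the left-hand sum equal to $|\Omega|$, and makes the right-hand maximum equal to $\max_{i\in\Omega}\bigl((k-1)+1\bigr) = k$. Hence the condition collapses to $\bigl|\bigcap_{i\in\Omega} S_i\bigr| + |\Omega| \leq k$, equivalently $\bigl|\bigcap_{i\in\Omega} S_i\bigr| \leq k - |\Omega|$, which is precisely the condition in the statement. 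That is the whole argument.

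There is essentially no obstacle here: Corollary \ref{corollary1} is a pure specialization of Theorem \ref{maintheorem}, so the only points requiring care are verifying parameter admissibility and matching the displayed matrix to the general construction of Section \ref{section_mainthm}. (The further equivalence of this determinant condition with the existence of a Reed-Solomon generator matrix having the prescribed zero pattern — Theorem \ref{thm_gmmds} — is the separate reduction already attributed to Dau \emph{et al.}, and is not needed to prove the corollary itself.)
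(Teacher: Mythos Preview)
Your proposal is correct and matches the paper's own approach exactly: the paper derives Corollary~\ref{corollary1} by specializing Theorem~\ref{maintheorem} to $m=k$, $r_i=1$, and $|S_i|=k-1$, and you have simply spelled out the admissibility check and the simplification of condition~(\ref{condition}) in detail.
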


%
%

%
%
\appendices
\section{Proof of \Cref{maintheorem}}\label{proofmainthm}
Suppose that for some nonempty $\Omega\subset[m]$, the condition (\ref{condition}) is not true.
Let $S_0=\bigcap_{i\in\Omega}S_i$, $r_0=\sum_{i\in \Omega}r_i$, ${k'=\max_{i\in\Omega}|S_i|+r_i}$.
Then, $|S_0|+r_0>k'$.
Consider the $r_0$ rows of $\mathbf M$ in the blocks indexed in $\Omega$.
They all have zeros in their last $k-k'$ entries.
Let ${\mathbf M_0\in\mathbb{K}_n^{r_0\times k'}}$ be the submatrix consisting of these rows without including the last $k-k'$ columns.
We will prove that $\mbox{rank}\,\mathbf M_0<r_0$, which implies $\det \mathbf M= 0$. 
Let $\mathbf W=((-\alpha_j)^{1-i})_{i\in[k'],j\in S_0}$ be $k'\times |S_0|$ Vandermonde matrix.
Then, $\mathbf M_0\cdot \mathbf W=0$ because the polynomials with the coefficients in the rows of $\mathbf M_0$ vanish at $-\alpha_j$ for $j\in S_0$. Hence,
\begin{equation}
\mbox{rank}\,\mathbf M_0\leq k'-\mbox{rank}\,\mathbf W\leq k'- \min\{k',|S_0|\}< r_0
\end{equation}
which proves the first direction.\\

For the other direction, we will apply induction on the parameters $(k,m,n)$ considered in the lexicographical order.
For ${m=1}$, $\mathcal{S}_{k,1,n}=\{[(\emptyset,k)]\}$ and ${\det\mathbf M[(\emptyset,k)]= \det\textbf I_k=1}$.
For $n=0$, all of $S_i$'s are empty; hence, for ${\Omega=[m]}$, (\ref{condition}) yields ${m=1}$, for which, we already showed ${\det\mathbf M=1}$.
For ${k\geq m\geq 2}$ and ${n\geq 1}$, assume that the statement is true for parameters $(k',m',n')$ that are smaller than $(k,m,n)$ with respect to lexicographical order.
Take any ${[(S_i,r_i)]_{i=1}^m\in\mathcal{S}_{k,m,n}}$ that satisfies the condition (\ref{condition}). We will prove that ${\det\mathbf M[(S_i,r_i)]_{i=1}^m\neq 0}$ under three cases:
\begin{enumerate}
	\item There exists $\Omega_1\subset[m]$ such that $2\leq|\Omega_1|\leq m-1$ and
	\begin{equation}
	\label{case1}
	\left|\bigcap_{i\in\Omega_1}S_i\right| + \sum_{i\in\Omega_1}r_i = \max_{i\in\Omega_1} |S_i|+r_i
	\end{equation}
	
	\item There exists a unique $i\in[m]$ such that $|S_i|+r_i=k$.
	
	\item Else (i.e. 1 and 2 are false).
\end{enumerate}

\subsection*{Case 1}
Let $\Omega_2=\{0\}\cup[m]-\Omega_1$.
Note that $2\leq |\Omega_1|,|\Omega_2|\leq m-1$.
Define
\begin{equation}
S_0 = \bigcap_{i\in\Omega_1}S_i
,\quad
r_0 = \sum_{i\in\Omega_1} r_i
\end{equation}
Then, (\ref{case1}) becomes
\begin{equation}
\label{case1becomes}
|S_0| + r_0 = \max_{i\in\Omega_1} |S_i|+r_i
\end{equation}
Define $S'_i=S_i-S_0$ for $i\in\Omega_1$.
Then,
\begin{equation}
[(S'_i,r_i)]_{i\in\Omega_1}\in\mathcal{S}_{r_0,|\Omega_1|,n}
,\quad
[(S_i,r_i)]_{i\in\Omega_2}\in\mathcal{S}_{k,|\Omega_2|,n}
\end{equation}
The first one is true because $r_0=\sum_{i\in\Omega_1}r_i$ and for any ${i\in\Omega_1}$, by (\ref{case1becomes}),
\begin{equation}
|S'_i|+r_i = |S_i|+r_i-|S_0| \leq r_0
\end{equation}
The second one is true because 
\begin{equation}
k=\sum_{i=1}^{m}r_i=\sum_{i\in\Omega_1}r_i+\sum_{i\in[m]-\Omega_1}r_i=\sum_{i\in\Omega_2}r_i,
\end{equation}
$|S_i|+r_i\leq k$ for $i\in [m]-\Omega_1$ and $|S_0|+r_0\leq k$ due to (\ref{case1becomes}).

By the induction hypothesis, the statement is true for
$[(S'_i,r_i)]_{i\in\Omega_1}$ and $[(S_i,r_i)]_{i\in\Omega_2}$.
We will show that both satisfy the condition (\ref{condition}):
\begin{enumerate}
	\item 
	For any nonempty $\Omega\subset\Omega_1$,
	\begin{align}
	\left|\bigcap_{i\in\Omega}S'_i\right| + \sum_{i\in\Omega}r_i
	&=\left|\bigcap_{i\in\Omega}S_i\right|-|S_0| + \sum_{i\in\Omega}r_i\\
	&\leq \max_{i\in\Omega} |S_i|+r_i - |S_0|\\
	&= \max_{i\in\Omega} |S'_i|+r_i
	\end{align}
	
	\item 
	For any nonempty $\Omega\subset\Omega_2$, if $0\notin\Omega_2$, then $\Omega\subset[m]$ and (\ref{condition}) holds trivially. Assume $\Omega=\{0\}\cup \Omega'$ for some $\Omega'\subset[m]-\Omega_1$. Then,
	\begin{align}
	\left|\bigcap_{i\in\Omega}S_i\right| + \sum_{i\in\Omega}r_i
	&=\left|\bigcap_{i\in \Omega_1\cup\Omega'}S_i\right| + \sum_{i\in\Omega_1\cup\Omega'}r_i\\
	&\leq \max_{i\in\Omega_1\cup\Omega'} |S_i|+r_i\\
	&=\max_{i\in\Omega} |S_i|+r_i
	\end{align}
\end{enumerate}
Hence, we have that
\begin{equation}
\det\mathbf M[(S'_i,r_i)]_{i\in\Omega_1}\neq 0
,\quad
\det\mathbf M[(S_i,r_i)]_{i\in\Omega_2}\neq 0
\end{equation}
Now, we will use \Cref{proposition}. Define for $i\in \{0\}\cup[m]$,
\begin{equation}
p_i=x^{k-|S_i|-r_i}\prod_{j\in S_i}(x+\alpha_j)
\end{equation}
and for $i\in\Omega_1$,
\begin{equation}
p'_i=x^{r_0-|S'_i|-r_i}\prod_{j\in S'_i}(x+\alpha_j)
\end{equation}
Note that for $i\in\Omega_1$, $p_i=p'_ip_0$.

Consider any $q_1,\dots,q_m\subset\mathbb{K}_n[x]$ such that $\deg q_i\leq r_i-1$ for $i\in[m]$ and $\sum_{i=1}^mp_iq_i=0$. We need to prove that $q_i=0$ for all $i\in[m]$. Define
\begin{equation}
q_0=\sum_{i\in\Omega_1}p'_iq_i
\end{equation}
Note that $\deg q_0\leq r_0-1$:
\begin{align}
\deg q_0
&\leq \max_{i\in \Omega_1}\, (\deg p'_i + \deg q_i)\\
&\leq \max_{i\in \Omega_1}\, ((r_0-r_i) + (r_i-1))\\
&= r_0-1
\end{align}
Also, we can write that
\begin{equation}
0=\sum_{i=1}^mp_iq_i = p_0\sum_{i\in\Omega_1}p'_iq_i + \sum_{i\in[m]-\Omega_1}p_iq_i = \sum_{i\in \Omega_2}p_iq_i
\end{equation}
Then, by \Cref{proposition}, we get $q_i=0$ for all $i\in\Omega_2$.
Then, $q_0=\sum_{i\in\Omega_1}p'_iq_i=0$.
Then, by \Cref{proposition}, $q_i=0$ for all $i\in\Omega_1$.
Hence, $q_i=0$ for all $i\in[m]$. By \Cref{proposition}, $\det\mathbf M[(S_i,r_i)]_{i=1}^m \neq 0$.

\subsection*{Case 2}
W.l.o.g., let $m$ be the maximizer. Then, for $i\in[m-1]$,
\begin{equation}
	k=|S_m|+r_m>|S_i|+r_i
\end{equation}
Then, the last column of $\mathbf M[(S_i,r_i)]_{i=1}^m$ is all zero except the last entry, which is $\prod_{j\in S_m}\alpha_j$.

Hence, we have
\begin{equation}
\det\mathbf M[(S_i,r_i)]_{i=1}^m = \det\mathbf M[(S_i,r'_i)]_{i=1}^m \cdot \prod_{j\in S_m}\alpha_j
\end{equation}
where $r'_m=r_m-1$ and $r'_i=r_i$ for $i\in[m-1]$ assuming that $r_m\geq 2$. (If $r_m=1$, the first multiplier would be $\det\mathbf M[(S_i,r_i)]_{i=1}^{m-1}$, which is nonzero by the induction hypothesis.)

Note that $[(S_i,r'_i)]_{i=1}^m\in\mathcal{S}_{k-1,m,n}$ since $\sum_{i=1}^mr'_i=k-1$ and $|S_i|+r'_i\leq k-1$ for any $i\in[m]$ due to the unique maximizer assumption.

By the induction hypothesis, the statement is true for $[(S_i,r'_i)]_{i=1}^m$.
If we prove that it satisfies the condition (\ref{condition}), then $\det\mathbf M[(S_i,r_i)]_{i=1}^m\neq 0$.

For any nonempty $\Omega\subset[m]$, if $m\notin\Omega$, then (\ref{condition}) holds trivially.
Assume $m\in\Omega$.
\begin{align}
\left|\bigcap_{i\in\Omega}S_i\right| + \sum_{i\in\Omega}r'_i
&=\left|\bigcap_{i\in\Omega}S_i\right| - 1+\sum_{i\in\Omega}r_i\\
&\leq \max_{i\in\Omega} |S_i|+r_i -1\\
&= k-1 \\
&= \max_{i\in\Omega} |S_i|+r'_i
\end{align}

\subsection*{Case 3}
For any nonempty $\Omega\subset[m]$ such that $|\Omega|\neq 1,m$, we have
\begin{equation}
\left|\bigcap_{i\in\Omega}S_i\right| + \sum_{i\in\Omega}r_i \leq \max_{i\in\Omega} |S_i|+r_i - 1
\end{equation}

Also, there exist at least two maximizers of $|S_i|+r_i$. W.l.o.g., assume that
\begin{equation}
k=|S_m|+r_m=|S_{m-1}|+r_{m-1}
\end{equation}

If $S_m= S_{m-1}$, we get a contradiction in (\ref{condition}):
\begin{equation}
r_m+r_{m-1} \leq \max\{r_m,r_{m-1}\}
\end{equation}

Then, either $S_{m-1}\neq [n]$ or $S_m\neq[n]$.
W.l.o.g., we can assume that $n\notin S_m$.
Substitute $\alpha_n=0$:
\begin{equation}
\left.\det\mathbf M[(S_i,r_i)]_{i=1}^m\right|_{\alpha_n=0}=\det\mathbf M[(S'_i,r_i)]_{i=1}^m
\end{equation}
where $S'_i=S_i-\{n\}$.

Note that $[(S'_i,r_i)]_{i=1}^m\in\mathcal{S}_{k,m,n-1}$ since $S'_i\subset[n-1]$ and $|S'_i|+r_i\leq |S_i|+r_i\leq k$  for $i\in[m]$.

By the induction hypothesis, the statement is true for $[(S'_i,r_i)]_{i=1}^m$.
If we prove that it satisfies the condition (\ref{condition}), then $\det\mathbf M[(S_i,r_i)]_{i=1}^m\neq 0$.

For $|\Omega|=1$, (\ref{condition}) holds trivially. For $|\Omega|\neq 1,m$, we have
\begin{align}
\left|\bigcap_{i\in\Omega}S'_i\right| + \sum_{i\in\Omega}r_i
&\leq \left|\bigcap_{i\in\Omega}S_i\right| + \sum_{i\in\Omega}r_i\\ 
&\leq \max_{i\in\Omega} |S_i|+r_i -1\\
&\leq \max_{i\in\Omega} |S'_i|+r_i
\end{align}
For $\Omega=[m]$, it is enough to show that ${k=\max_{i\in [m]}|S'_i|+r_i}$, which is true because
\begin{equation}
|S'_m|+r_m=|S_m|+r_m=k
\end{equation}


%
%



%
%


\end{document}